\documentclass[final,5p,times,twocolumn]{elsarticle}

\usepackage{graphicx}      % include this line if your document contains figures
\usepackage{natbib}        % required for bibliography
\usepackage{amsmath}  
\usepackage{amssymb}
\usepackage{mathtools}
\usepackage{booktabs}
\usepackage{url}
\usepackage{amsthm}

\newtheorem{hypo}{Hypothesis}

\newtheorem{remark}{Remark}
\newtheorem{proposition}{Proposition}
\newtheorem{lem}{Lemma}
\newtheorem{thm}{Theorem}
\usepackage[subtle]{savetrees}
\usepackage{xcolor}
\begin{document}
\begin{frontmatter}

\title{Adaptive Sliding Mode Formation Control for Space Interferometer Missions} 
% Title, preferably not more than 10 words.

%\thanks[footnoteinfo]{Sponsor and financial support acknowledgment goes here. Paper titles should be written in uppercase and lowercase letters, not all uppercase.}
\author[PoliTo]{Mauro Mancini}\ead{mauro.mancini@polito.it}
\author[PoliTo]{Giulia Alessandra Tataru}\ead{s322755@studenti.polito.it}
\author[Osaka]{Satoshi Satoh}\ead{satoh@mech.eng.osaka-u.ac.jp)}
\author[PoliTo]{Elisa Capello\corref{cor1}}\ead{elisa.capello@polito.it}

\affiliation[PoliTo]{organization={Department  of Mechanical and Aerospace Engineering, Politecnico di Torino}, 
addressline={Corso Duca degli Abruzzi 24},  
city={Torino}, 
postcode={10129}, 
state={}, 
country={Italy}} 
\affiliation[Osaka]{organization={Department of Mechanical Engineering, The University of Osaka}, 
addressline={2-1 Yamadaoka, Suita},  
city={Osaka}, 
postcode={565-0871}, 
state={}, 
country={Japan}} 

\cortext[cor1]{Corresponding author} \fntext[]{Received 31 October 2025, Revised 6 February 2026, Accepted 20 April 2026 \newline Digital Object Identifier (DOI): \url{10.1016/j.conengprac.2026.107025}}

% \address[PoliTo]{Department of Mechanical and Aerospace Engineering, Politecnico di Torino, Corso Duca degli Abruzzi 24, 10129 Torino, Italy
% (e-mails: s322755@studenti.polito.it, mauro.mancini@polito.it, elisa.capello@polito.it)}
% \address[Osaka]{Department of Mechanical Engineering, The University of Osaka, 565-0871, Japan (e-mail: satoh@mech.eng.osaka-u.ac.jp)}

\begin{abstract}                % Abstract of 50--100 words
This paper addresses high-precision formation control for spacecraft operating in low Earth orbit, motivated by the requirements of future space interferometry missions such as SILVIA. The proposed approach formulates the relative dynamics within a port-Hamiltonian framework and introduces an Adaptive Boundary-layer Sliding Mode Control (AB-SMC) law to overcome the limitations of conventional SMC with constant gains. The key innovation lies in a dynamic, error-dependent adjustment of the sliding manifold, enhancing transient performance while guaranteeing high-precision trajectory tracking.  Rigorous Lyapunov-based analysis establishes explicit ultimate bounds on the tracking error and ensures closed-loop stability, while extensive Monte Carlo simulations further validate the proposed AB-SMC compared to standard control approaches. Results show that AB-SMC achieves faster convergence, lower control effort, and sub-millimeter tracking accuracy, demonstrating its practical robustness and implementation feasibility in realistic, uncertain orbital environments while respecting low-thrust constraints.
\end{abstract}

\begin{keyword}
Aerospace, Control of multi satellite systems; Guidance, navigation and control of aircraft and spacecraft  %When the PaperPlaza system prompts to choose keywords, the first keyword must correspond to the appropriate TC associated with the participating journal
\end{keyword}

\end{frontmatter}
%===============================================================================

\textcopyright 2026. This manuscript version is made available under the CC BY 4.0 license \url{https://creativecommons.org/licenses/by/4.0/}

\section{Introduction}
In recent years, spacecraft formation flying has emerged as a critical technology for advancing space science and exploration. 
Distributed spacecraft architectures present several advantages over traditional monolithic platforms, including enhanced reliability, robustness, improved efficiency, and cost-effectiveness \citep{doi:10.2514/1.G002868}. Moreover, formation flying enables mission types that a single spacecraft cannot undertake, such as high-precision interferometry, distributed sensing, and three-dimensional planetary surface reconstruction \citep{schweighart2001satellite}. Recent research has also shown that vision-based control can enable efficient formation keeping of spacecraft \citep{Pomares2024}.
The growing interest in this technology is evidenced by several current and planned missions, such as the Laser Interferometer Space Antenna (LISA) \citep{DR03, XJL24z}, the Deci-hertz Interferometer Gravitational Wave Observatory (DECIGO) \citep{KawamuraEtc08}, the Large Interferometer for Exoplanets (LIFE) \citep{QuanzEtc22}, the Space Experiment of Infrared Interferometric Observation Satellite (SEIRIOS) \citep{IkariEtc21, MIKINY22}, and the forthcoming Space Interferometer Laboratory Voyaging towards Innovative Applications (SILVIA) mission \citep{SILVIA25}. 
%added by SS
\textcolor{black}{Notably, in-orbit verification of precision formation flying is ongoing. ESA's Proba-3, launched in December 2024, is now entering its flight phase, aiming for precision formation flying technology demonstration and solar coronagraphy. Using an integrated navigation architecture combining relative GPS, vision-based sensors, and laser ranging, the two independent $300$ kg spacecraft have achieved millimeter-level relative position control and arcsecond-level line-of-sight control in a highly elliptical orbit at a separation of approximately $150$ meters \citep{SerranoEtc25}. 
More recently, small-satellite-based technology demonstration missions for precision formation flight have been actively pursued, including CNES’s POC\_ESSAIM \citep{MDDL24} and the VISORS mission developed through multi-institution collaboration \citep{KDL23}.}
Additional examples of formation-flying missions can be found in \citep{aerospace9070362}. 

Despite these potential benefits, formation-flying missions demand unprecedented levels of precision in maintaining the relative positions and orientations among multiple spacecraft \citep{doi:10.2514/1.G002868}. The primary challenge lies in the design of robust and reliable Guidance, Navigation, and Control (GNC) systems capable of operating autonomously in the orbital environment. In particular, the control function must ensure high-accuracy formation maintenance under complex, highly nonlinear dynamics and in the presence of external perturbations such as the $J_2$ effect and atmospheric drag. Furthermore, the control laws must produce physically realizable actuation commands, maintaining low control effort to enable long-duration missions while preventing actuator saturation, a critical constraint for spacecraft equipped with low-thrust propulsion systems.
%added by SS
\textcolor{black}{
Although formation control has been extensively studied for a wide range of terrestrial platforms, including unmanned aerial, ground, and underwater vehicles \citep{BAW11,Ahn20}, the aforementioned aspects constitute key characteristics of spacecraft formation control. In spacecraft applications, control strategies must actively exploit natural orbital dynamics to maintain fuel-efficient configurations; therefore, position-based control approaches are often preferred, as they enable the use of pre-designed reference orbits that satisfy both mission requirements and orbital mechanics. Accordingly, this paper adopts a formation tracking control approach.
\subsection{Mission Context}}
This paper focuses on the problem of precise trajectory tracking within the framework of the SILVIA mission mentioned above, whose primary objective is to demonstrate precision formation-flying technologies in low Earth orbit (LEO). The selected orbital configuration consists of a circular orbit of 500 km altitude, where three spacecraft maintain a triangular formation with 100 m separation while orbiting along a General Circular Orbit (GCO) \citep{SILVIA25}.
The LEO environment subjects the formation to significant perturbations, mainly atmospheric drag and gravity field irregularities, that greatly complicate precise relative motion control. As SILVIA mission specifies a submillimeter level relative positioning requirement, achieving such accuracy under these conditions represents a demanding challenge, emphasizing the need for robust and reliable control algorithms.

Following the approach in \citep{JYY20, SH26, IchitoTABUCHI2024105}, this work treats the nonlinear relative motion equations in formation flying as a class of port-Hamiltonian systems \citep{MS92,DMSB09}, without relying on linear approximations. This approach captures the intrinsic nonlinear dynamics of the formation, while also accounting for external perturbations, such as $J_2$ effects and atmospheric drag. %Moreover, an error system is defined through a generalized canonical transformation, preserving the port-Hamiltonian structure. 
In this framework, the tracking to the target trajectory in the original system and  the convergence to the origin in the error system are in one-to-one correspondence, thus enabling systematic stability analysis.
%In \citep{IchitoTABUCHI2024105}, the properties of port-Hamiltonian structure was exploited to prove the stability of first-order trajectory-tracking Sliding Mode Control (SMC).

\textcolor{black}{Several nonlinear control approaches could be considered for the addressed problem, including adaptive backstepping controller %for spacecraft attitude maneuvers \citep{Ren2025},
combined with Sliding Mode Control (SMC) \citep{https://doi.org/10.1155/2024/6847067}, 
%Nonlinear Model Predictive Control (NMPC) for autonomous spacecraft attitude guidance \citep{Relvas2025, Kondo2024}, 
distributed Model Predictive Control (MPC) with leader-follower configuration \citep{9964448}, time-varying observer-based control strategies \citep{Shao01112024}, 
adaptive control methods \citep{9793721}, control schemes based on the State-Dependent Differential Riccati Equation (SDDRE) \citep{aerospace12030201}, optimal control \citep{10609539}, and learning-based methods \citep{app12052485}.
%feedback linearization techniques in advanced spacecraft control studies \citep{IAC2025}. 
These methods have been investigated to address several issues in spacecraft formation flight, such as fuel consumption, actuator faults, constrained maneuvering, and trajectory optimization in complex dynamical environments. In this work, Sliding Mode Control (SMC) was selected due to its relatively simple implementation, strong robustness with respect to model uncertainties and external disturbances. Compared to more computationally demanding approaches such as nonlinear MPC, SMC provides a favorable trade-off between performance and implementation complexity, making it particularly suitable for mission-grade onboard applications.}
Starting from \citep{IchitoTABUCHI2024105}, a first-order trajectory-tracking Sliding Mode Control (SMC) is proposed. SMC is a nonlinear control technique known for its precision and robustness \citep{utkin2013sliding}, whose design involves selecting a sliding surface with attractive convergence properties and defining a discontinuous control law to confine the system trajectories to this surface. In particular, SMC can handle model uncertainties and external perturbations by driving the system states onto a so-called sliding surface, thereby achieving desirable dynamics once on that surface. 
To mitigate the well-known issue of chattering (high-frequency oscillations in the closed-loop dynamics), the authors of \citep{mancini25} introduced a Boundary layer SMC (B-SMC) \citep{slotine1983tracking}, smoothing the discontinuity of the control action near the sliding surface. The B-SMC alleviates the deleterious effects of chattering — which in formation flight could mean frequent small rapid actuator changes, inter-vehicle jitter, or communications stress — by replacing the sign-function discontinuity with a saturation or continuous approximation within a boundary thickness. This smoother behavior improves the use of actuators and mitigates potential excitation of high-frequency dynamics, which is particularly important when coordinating many vehicles in close proximity. 
\textcolor{black}{It is well recognized that the introduction of a boundary layer in SMC weakens the convergence properties of the ideal discontinuous formulation, resulting in ultimate boundedness of the tracking error. %Nevertheless, the tuning strategy adopted in this work allows the bound on the tracking error to be explicitly selected as a design parameter, thus guaranteeing uniform boundedness with a prescribed level of tracking accuracy. 
%Although continuous Higher-Order SMC (HOSMC) effectively attenuate chattering while preserving ideal convergence properties in continuous time, their closed-loop performance are particularly sensitive to discretization effects and reduced sampling rates when implemented on digital platforms \citep{UTKIN202010244, Boiko2009, shtessel2014sliding}. Given the limited computational resources and sampling frequencies typically available on onboard spacecraft computers, B-SMC represents a practically reliable baseline for the problem under consideration.}
A viable strategy for reducing chattering while preserving ideal convergence properties is provided by continuous Higher-Order SMC (HOSMC). However, these controllers are particularly sensitive to discretization effects and reduced sampling rates when implemented on digital platforms \citep{UTKIN202010244, Boiko2009, shtessel2014sliding}. Considering the limited computational resources typically available on spacecraft onboard computers, B-SMC represents a more reliable reference basis for the problem under consideration compared to HOSMC.}

\textcolor{black}{\subsection{Proposed Control Framework and Main Contributions}
In this work, we first establish a Lyapunov-based proof of closed-loop stability under B-SMC, providing an explicit parametrization of the resulting invariant set as a function of the control gains. This analysis reveals a fundamental limitation of constant-gain B-SMC, namely the inherent trade-off between transient performance and steady-state tracking accuracy.
Specifically, the choice of fixed gains imposes a compromise between smooth control action and precise steady-state tracking, which cannot be alleviated through static gain tuning alone. To overcome this limitation, we introduce a novel Adaptive Boundary-layer Sliding Mode Control (AB-SMC) law, in which the parameters of the sliding manifold are dynamically adjusted based on the system states.}
%In this work, we first provide a Lyapunov-based proof of closed-loop stability under B-SMC, including a quantitative parametrization of the invariant set in terms of the control gains. This analysis highlights a fundamental limitation of constant-gain B-SMC, which requires a trade-off between transient performance and steady-state error.

\textcolor{black}{Although adaptive and boundary-layer SMC have been widely investigated, the sliding manifold is typically assumed to be fixed and designed a priori.
In classical adaptive SMC, the control gain is increased to force sliding motion even under unknown uncertainties and disturbances \citep{Plestan01092010}. Although robust, this paradigm inherently promotes fast and aggressive closed-loop dynamics, making it unsuitable for systems with limited control authority.
Within the boundary-layer SMC framework, time-varying and state-dependent boundary-layer widths have been proposed to mitigate the trade-off between chattering attenuation and tracking accuracy \citep{1039802}. However, transient behaviour remains dictated by the fixed geometry of the sliding manifold and can still cause excessively aggressive dynamics in input-constrained systems.
%In contrast, the proposed AB-SMC adopts a different paradigm: the control gain and the boundary-layer thickness are kept constant, while a purely state-dependent adaptive law is used to shape the sliding manifold parameters. %This approach directly regulates the closed-loop error dynamics, enabling smooth, low-effort transients for large tracking errors and high-precision steady-state behavior as the system approaches the desired configuration, without increasing control aggressiveness.
}
\textcolor{black}{In contrast, the proposed AB-SMC adopts a different paradigm: the control gain and the boundary-layer thickness are kept constant, while a purely state-dependent adaptive law is used to shape the sliding manifold parameters. 
The idea of adapting the parameters of the sliding manifold has already been explored in the literature. Several works rely on time-scheduled laws, in which the manifold parameters evolve according to predefined time profiles \citep{tokat2015classification}. In these approaches, the closed-loop behavior is inherently tied to a nominal transient, which limits the ability of the controller to respond effectively to changes in the reference trajectory or operating conditions.
Other contributions introduce state-dependent adaptation through fuzzy-logic mechanisms \citep{tokat2003}, at the price of increased design complexity and computational burden. In contrast, the proposed AB-SMC employs a simple state-dependent adaptive law based on a low-complexity differential equation, which directly shapes the sliding manifold while avoiding the need for defuzzification process.
The proposed AB-SMC framework is rigorously analyzed via Lyapunov methods and validated through numerical simulations, demonstrating smooth transients for large tracking errors and high-precision steady-state behavior as the system approaches the desired configuration, without increasing control aggressiveness. %improved transient response, reduced control effort, and high-precision trajectory tracking, while preserving closed-loop stability. 
%To overcome this limitation, we introduce a novel Adaptive Boundary-layer SMC (AB-SMC) law and rigorously characterize its performance through Lyapunov-based analysis, demonstrating improved transient response, reduced control effort, and high-precision trajectory tracking without compromising dynamic stability.
}

%In this work, we first provide a Lyapunov-based proof of closed-loop stability under B-SMC, including a quantitative parametrization of the invariant set in terms of the control gains. This analysis highlights a fundamental limitation of constant-gain B-SMC, which requires a trade-off between transient performance and steady-state error. To overcome this limitation, we introduce a novel Adaptive Boundary-layer SMC (AB-SMC) law and rigorously characterize its performance through Lyapunov-based analysis, demonstrating improved transient response, reduced control effort, and high-precision trajectory tracking without compromising dynamic stability.
%To assess the effectiveness and reliability of the proposed control strategies, an extensive numerical simulation campaign is conducted under realistic orbital conditions. The study systematically evaluates four different control approaches: Passivity-Based Control (PBC), SMC, constant gains B-SMC, and AB-SMC with time-varying gains. 
The numerical simulations campaign is conducted under reailistic orbital conditions, and the study systematically evaluates four different control approaches: Passivity-Based Control (PBC), SMC, constant gains B-SMC, and AB-SMC with time-varying gains.
The simulations are designed to rigorously assess the controller performance in the presence of orbital perturbations, such as the effect of $J_2$ and atmospheric drag, as well as across a wide range of initial conditions. Particular attention is given to the ability of each controller to maintain precise relative positioning, to mitigate the impact of external disturbances over extended mission durations, and to minimize control effort, which is critical for practical implementation with low-thrust actuators. The simulation framework not only serves to validate the theoretical control designs but also provides essential insights into their practical applicability and efficiency in operational space environments. By systematically testing all control schemes under external disturbances and various initial conditions, the study ensures a comprehensive assessment of robustness, stability, and control efficiency. The simulations demonstrate the feasibility of achieving high-precision spacecraft formations in GCO and effectively bridge the gap between theoretical nonlinear control design and its practical implementation in realistic orbital scenarios. 
The results highlight the inherent trade-offs between tracking accuracy, transient performance, and control effort across different control strategies. In particular, the proposed AB-SMC consistently outperforms the other approaches, achieving faster convergence, reduced control effort, and superior trajectory-tracking precision, demonstrating its potential as a highly effective solution for next-generation formation-flying missions.

The rest of the paper is organized as follows. Section 2 presents the relative motion dynamics of the formation, formulated within the port-Hamiltonian framework. Section 3 details the control strategies, %including classical SMC, constant gains B-SMC, and the proposed AB-SMC,
along with the associated Lyapunov-based stability analyses. Section 4 reports the results of the numerical simulation campaign.%, evaluating controller performance in realistic orbital conditions and highlighting the advantages of AB-SMC. 
Finally, Section 5 draws the conclusions and outlines perspectives for future formation-flying missions.

\section{Spacecraft Relative Dynamics}
%\subsection{Port-Hamiltonian Representation of the Nonlinear Relative Motion}
The nonlinear relative dynamics between a deputy spacecraft and a chief spacecraft under the influence of the Earth's $J_2$ perturbation can be represented within the port-Hamiltonian framework, as stated in \citep{IchitoTABUCHI2024105}.\\
Let $Q \in \mathbb{R}^3$ denote the deputy’s position in the Earth-Centered Inertial (ECI) frame and $u_{\mathrm{ECI}} \in \mathbb{R}^3$ the control input. 
The deputy's dynamics is given by
\begin{equation}
    \ddot{Q} = -\frac{ \mu}{\|Q\|^3}Q + f_{J_2}(Q) + u_{\mathrm{ECI}},
    \label{eq:ECI_dynamics}
\end{equation}
where $\mu$ is the Earth’s gravitational parameter and $f_{J_2}(Q)$ represents the perturbation caused by the Earth's oblateness.\\
To describe the relative motion with respect to the chief, the ECI frame is transformed to the local-vertical–local-horizontal (LVLH) frame using the rotation matrix $T(t)\in SO(3)$.
%, whose columns are constructed from the chief’s position $Q_c$ and angular momentum $h_c = Q_c \times \dot{Q}_c$ 
The LVLH frame can be expressed as a rotational sequence given by $(\Omega_c, i_c, \theta_c)$ as the (3-1-3) Euler angles from the ECI frame, where $\Omega_c, i_c$ and $\theta_c$ denote the right
ascension of the ascending node, inclination, and argument of latitude of the chief, respectively
(see Fig.~\ref{fig:lvlh} for a schematic of the ECI and LVLH frames). The transformation satisfies $\dot{T}(t) = T(t)\,\omega_c^{\times}$, where $\omega_c \in \mathbb{R}^3$ is the angular velocity of the LVLH frame relative to the ECI frame and $(\cdot)^{\times}$ denotes the skew-symmetric operator.\\
Denoting by $q \in \mathbb{R}^3$ the deputy’s position with respect to the chief in the LVLH frame, the relative motion is described by the following equations, where $u=T(t)^{\top}u_{\mathrm{ECI}}$ is the control input expressed in the LVLH frame.
\begin{align*}
    \ddot{q} &= -\ddot{q}_c-(T(t)^{\top}\dot{T}(t)\,\omega_c^{\times}+\dot{\omega}_c^{\times})(q+q_c) 
    - 2\omega_c^{\times}(\dot{q}+\dot{q}_c) \nonumber \\
    & -\frac{\mu}{\|T(t)(q+q_c)\|^3}(q+q_c)
    + T(t)^{\top}f_{J_2}(T(t)(q+q_c))+u,
    \label{eq. LVLH_dynamics}
\end{align*}
Then, the Hamiltonian is derived as follows.
The gravitational potential including the $J_2$ perturbation is defined as
\begin{equation*}
    V(T(t)(q+q_c))=-\frac{\mu}{r}\left\{1 - \frac{J_2}{2}\left(\frac{R_e}{r}\right)^2\!\!\left(3\frac{Z^2}{r^2} - 1\right)\right\},
\end{equation*}
where $r=\|q+q_c\|$ and $R_e$ is the radius of the Earth.
Using the kinetic and potential energy, the Lagrangian is given by
\begin{equation*}\begin{split}
    L =& \frac{1}{2}(\dot{q} + \dot{q}_c + \omega_c^{\times}(q + q_c))^{\top}(\dot{q} +\dot{q}_c + \omega_c^{\times}(q + q_c)) \\
    &- V(T(t)(q + q_c)).
\end{split}\end{equation*}
The conjugate momentum is $p=\partial L^{\top} / \partial \dot{q}=\dot{q}+\dot{q}_c+\omega_c^{\times}(q+q_c)$, leading to the Hamiltonian
\begin{equation*}
    H(q,p,t)=\frac{1}{2}p^{\top}p - p^{\top}\!\bigl(\omega_c^{\times}(q + q_c) + \dot{q}_c\bigr) + V(T(t)(q + q_c)).
    \label{eq. Hamiltonian}
\end{equation*}

Let $x=[q^{\top}\; p^{\top}]^{\top}$ and $I_3$ denotes the 3$\times$3 identity matrix. Then, the nonlinear relative motion can be written in the port-Hamiltonian form:
% \begin{equation}
% \begin{cases}
% \dot{x}= \begin{bmatrix}
% O_{3,3} \hspace{5mm} I_3\\
% -I_3 \hspace{5mm} O_{3,3}
% \end{bmatrix}\frac{\partial H(x,t)^\mathsf{T}}{\partial x}+ \begin{bmatrix}
% O_{3,3} \\
% I_3 
% \end{bmatrix}u \vspace{6pt} \\ 
  
% y=[O_{3,3} \hspace{5mm} I_3]\frac{\partial H(x,t)^\mathsf{T}}{\partial x}
% \end{cases}
% \end{equation}
\begin{equation*}
\dot{x}= \begin{bmatrix}
O_{3,3} \hspace{5mm} I_3\\
-I_3 \hspace{5mm} O_{3,3}
\end{bmatrix}\frac{\partial H(x,t)^\mathsf{T}}{\partial x}+ \begin{bmatrix}
O_{3,3} \\
I_3 
\end{bmatrix}u 
\end{equation*}
\begin{figure}
    \centering
    \includegraphics[width=0.6\linewidth]{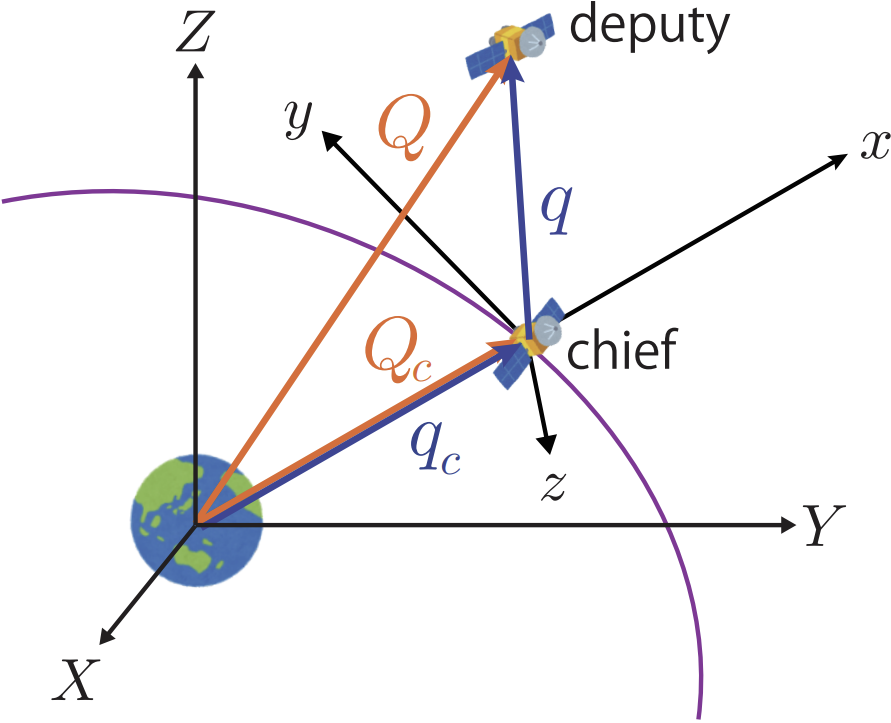}
    \caption{ECI and LVLH frames \citep{IchitoTABUCHI2024105}}
    \label{fig:lvlh}
\end{figure}
%where $x=[q^{\top}\; p^{\top}]^{\top}$. %and $y$ denotes the collocated output.

%This representation explicitly captures the interconnection, dissipation and input-output energy exchange inherent in the spacecraft relative motion.
%\subsection{Construction of the Error System via Generalized Canonical Transformations}

According to the generalized canonical transformation framework \citep{FUJIMOTO2003}, a specific transformation has been developed in \citep{IchitoTABUCHI2024105} that converts the actual state $x = [\,q^\top \; p^\top\,]^\top$ into error coordinates with respect to the desired trajectory $q^d(t)$ and its associated conjugate momentum $p^d(t)$ as follows:
\begin{equation}
\bar{x} =
\begin{bmatrix}
T(t)(q - q^d)\\[2pt]
p - p^d
\end{bmatrix}
=
\begin{bmatrix}
\bar{q}\\[2pt]
\bar{p}
\end{bmatrix},
\label{eq:errors}
\end{equation}
where \(T(t)\) is the LVLH-to-ECI transformation matrix defined previously and \( p^d = \dot{q}^d + \dot{q}_c + \omega_c^{\times}(q^d + q_c)\).
The resulting error dynamics preserve the port-Hamiltonian structure and are given by
\begin{equation}
\dot{\bar{x}} =
\begin{bmatrix}
O_{3,3} & T(t)\\[2pt]
-\,T(t)^\mathsf{T} & O_{3,3}
\end{bmatrix}
\dfrac{\partial \bar{H}(\bar{x},t)^\mathsf{T}}{\partial \bar{x}}
+
\begin{bmatrix}
O_{3,3} \\[2pt]
I_3
\end{bmatrix}\bar{u},
\label{eq:pH_error}
\end{equation}
% \begin{equation}
% \begin{cases}
% \dot{\bar{x}} =
% \begin{bmatrix}
% O_{3,3} & T(t)\\[2pt]
% -\,T(t)^\mathsf{T} & O_{3,3}
% \end{bmatrix}
% \dfrac{\partial \bar{H}(\bar{x},t)^\mathsf{T}}{\partial \bar{x}}
% +
% \begin{bmatrix}
% O_{3,3} \\[2pt]
% I_3
% \end{bmatrix}\bar{u},\\[2ex]
% \bar{y} =
% [\,O_{3,3} \;\; I_3\,]
% \dfrac{\partial \bar{H}(\bar{x},t)^\mathsf{T}}{\partial \bar{x}},
% \end{cases}
% \label{eq:pH_error}
% \end{equation}
where the transformed Hamiltonian is \(\bar{H}(\bar{x},t) = \tfrac{1}{2}\bar{p}^\top \bar{p}\).

System~\eqref{eq:pH_error} defines the port-Hamiltonian representation of the error dynamics, whose equilibrium $\bar{x}=0$ corresponds to perfect tracking of $(q,p)$ to $(q^d,p^d)$.  
System~\eqref{eq:pH_error} can be explicitly written as
\begin{equation}
\dot{\bar{q}} = T(t)\bar{p}, \qquad
\dot{\bar{p}} = \bar{u},
\label{eq:error_dynamics}
\end{equation}
which forms the basis for the sliding-mode control design presented in the next section.

\section{Control Methodology}
This section presents the control methodology for precise formation-keeping of spacecraft under bounded disturbances. We first recall a first-order Sliding Mode Control (SMC) approach used in previous work \citep{IchitoTABUCHI2024105} and discuss its limitations. Next, we introduce a boundary-layer SMC (B-SMC) strategy and analyze its stability properties, providing bounds on the sliding variable and position error. Finally, an Adaptive Boundary-layer SMC (AB-SMC) law is proposed to overcome the trade-off between transient performance and steady-state accuracy inherent in constant-gain B-SMC, and its closed-loop behavior is rigorously analyzed using Lyapunov methods.
\subsection{Preliminaries}
\begin{lem}[Stated and proved in \citep{IchitoTABUCHI2024105}]
\label{lemma1}
Consider the relative error dynamics of the spacecraft formation expressed in port-Hamiltonian form, and define the sliding variable $S$ and the control input $u$ as follows:
\begin{align}
    &S(\bar{q}, \bar{p}, t)
    = \frac{\Lambda}{2}\,T(t)^{\top}\bar{q} + \bar{p},\quad \Lambda=2\mathrm{diag}\{K_1, K_2, K_3\}\label{eq:sliding_variable}\\
    &\bar{u}
    = -k{\Lambda}\,\mathrm{sign}\!\left(S(\bar{q}, \bar{p}, t)\right)
    - \frac{\partial S(\bar{q}, \bar{p}, t)}{\partial \bar{q}}\,
      T(t)\,\bar{p}.
\label{eq:ctrl_input_sign}\end{align}
Where $K_1$, $K_2$, $K_3$, and $k$ are all positive constant gain parameters. %and $\Lambda = 2 \mathrm{diag} \{ K_1,K_2,K_3 \} \succ0$.
Under this control law, the error system reaches the sliding manifold $S = 0$ in finite time, after which it remains on the manifold and converges asymptotically to the origin.
\qed\end{lem}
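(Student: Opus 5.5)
Let me first compute the dynamics of $S$ along the closed loop. Since $T(t)$ is time-varying, differentiating \eqref{eq:sliding_variable} gives
\[
\dot S=\tfrac{\Lambda}{2}\dot T^\top\bar q+\tfrac{\Lambda}{2}T^\top\dot{\bar q}+\dot{\bar p}.
\]
Using \eqref{eq:error_dynamics}, $\dot{\bar q}=T\bar p$ and $\dot{\bar p}=\bar u$. Since $\partial S/\partial\bar q=\tfrac{\Lambda}{2}T^\top$, the feedback term $-\frac{\partial S}{\partial \bar q}T\bar p$ in \eqref{eq:ctrl_input_sign} cancels exactly the term $\tfrac{\Lambda}{2}T^\top T\bar p$. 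What remains is
\[
\dot S=-k\Lambda\,\mathrm{sign}(S)+\tfrac{\Lambda}{2}\dot T^\top\bar q .
\]
In the statement of \citep{IchitoTABUCHI2024105} I expect the time derivative of $T$ is either neglected in the partial-derivative sense or absorbed. The cleanest route is to interpret $\dot S$ as the total derivative and to treat the residual $\tfrac{\Lambda}{2}\dot T^\top\bar q=-\tfrac{\Lambda}{2}\omega_c^\times T^\top\bar q$ explicitly. It has a bounded, skew-like structure: $\omega_c^\times$ is skew-symmetric, and with $\Lambda$ diagonal it is not exactly skew in general. I will either show its contribution is dominated by $k\Lambda$ on a region where $\|\bar q\|$ is bounded, or follow the cited convention in which the term vanishes. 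This is the main obstacle.

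For the reaching phase I take the Lyapunov candidate $V_1=\tfrac12 S^\top\Lambda^{-1}S$. Then
\[
\dot V_1=-k\|S\|_1+\tfrac12 S^\top\dot T^\top\bar q .
\]
Ignoring the residual, this gives $\dot V_1\le -k\|S\|_1\le -k\sqrt{2\lambda_{\min}(\Lambda)}\,V_1^{1/2}$. By the standard comparison lemma, $S$ then reaches zero in finite time $t_r\le \|S(0)\|_1$-dependent bound, specifically $t_r\le \sqrt{2V_1(0)}/(k\sqrt{\lambda_{\min}(\Lambda)})$.

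If the residual is kept, I will first show $\bar q$ stays bounded on any finite interval, since $\dot{\bar q}=T\bar p$ with $\bar p=S-\tfrac{\Lambda}{2}T^\top\bar q$ gives linear growth at worst. I will then require $k$ large relative to $\|\omega_c\|\sup\|\bar q\|$, which holds for a circular chief orbit where $\|\omega_c\|$ is essentially the constant mean motion. Invariance of $S=0$ follows from the same inequality: $V_1$ is nonincreasing and zero at $t_r$, so it stays zero, and the discontinuous right-hand side is understood in the Filippov sense.

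On the manifold $S\equiv0$ we have $\bar p=-\tfrac{\Lambda}{2}T^\top\bar q$, so
\[
\dot{\bar q}=T\bar p=-\tfrac12 T\Lambda T^\top\bar q .
\]
Take $V_2=\tfrac12\bar q^\top\bar q$. Then
\[
\dot V_2=-\tfrac12\bar q^\top T\Lambda T^\top\bar q\le-\lambda_{\min}(\Lambda)V_2 ,
\]
because $T\in SO(3)$ preserves the eigenvalue bounds of the positive definite $\Lambda$. Hence $\bar q\to0$ exponentially, and $\bar p=-\tfrac{\Lambda}{2}T^\top\bar q\to0$ as well. So $\bar x\to0$, which in the original coordinates is tracking of $(q^d,p^d)$.
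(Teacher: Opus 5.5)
\paragraph{Verdict.} The proposal has a genuine gap in the reaching phase; the sliding-phase argument is fine.

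\paragraph{The gap.} Your computation $\dot S=-k\Lambda\,\mathrm{sign}(S)+\tfrac{\Lambda}{2}\dot T^{\top}\bar q$ is correct, and so is the sliding-phase bound $\dot V_2\le-\lambda_{\min}(\Lambda)V_2$ on $S\equiv0$. The gap is exactly the obstacle you name and then leave open. The residual $\tfrac{\Lambda}{2}\dot T^{\top}\bar q=-\tfrac{\Lambda}{2}\omega_c^{\times}T^{\top}\bar q$ is not removed by any convention. The control law contains $-\frac{\partial S}{\partial\bar q}T\bar p$ but no $-\partial S/\partial t$ term. So the finite-time estimate you derive ``ignoring the residual'' says nothing about the actual closed loop.

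Your fallback does not close the gap either:
\begin{itemize}
\item Boundedness of $\bar q$ on every finite interval gives no number against which to choose $k$. Your ``linear growth'' claim also presupposes a bound on $S$, which is what you are trying to prove.
\item What you need is a bound on $\|\dot T^{\top}\bar q\|_\infty$ that holds uniformly over the whole reaching interval. The length of that interval is itself governed by the bound, so as written the argument is circular.
\item Some relation between $k$ and $\dot T^{\top}\bar q$ is unavoidable even for invariance. In the Filippov sense, $S=0$ can be maintained only if the equivalent value of $\mathrm{sign}(S)$, namely $\tfrac{1}{2k}\dot T^{\top}\bar q$, lies in $[-1,1]^3$, i.e.\ $\|\dot T^{\top}\bar q\|_\infty\le 2k$.
\end{itemize}
Hence your argument can at best give a version with an explicit gain condition, not the statement for arbitrary $k>0$.

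\paragraph{How to close it.} The paper does not reprove this lemma; it only cites it. However, when it treats the identical $S$-dynamics for the boundary-layer law (Theorem~\ref{thm:boundary_layer_conv}), it keeps the residual. It closes the argument with the a priori bound $\|\dot T^{\top}\bar q\|_\infty\le\mu$ of Hypothesis~\ref{ass:boundedness} and the gain condition $k>\tfrac32\mu K_{\max}/K_{\min}$. You need an ingredient of this kind. There are two ways to supply it.
\begin{itemize}
\item \emph{Import Hypothesis~\ref{ass:boundedness}.} With your weighted function $V_1=\tfrac12S^{\top}\Lambda^{-1}S$ you get $\tfrac12S^{\top}\dot T^{\top}\bar q\le\tfrac{\mu}{2}\|S\|_1$. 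Hence $\dot V_1\le-(k-\mu/2)\|S\|_1$, which gives finite-time reaching and invariance for $k>\mu/2$. This is milder than the paper's condition, because weighting by $\Lambda^{-1}$ removes the $K_{\max}/K_{\min}$ factor.
\item \emph{Avoid the hypothesis with a bootstrap.} With $e=T^{\top}\bar q$ one has $\frac{d}{dt}\tfrac12\|\bar q\|^2=e^{\top}S-\tfrac12e^{\top}\Lambda e$. So while $V_1$ is nonincreasing,
$\|\bar q\|\le R_0:=\max\{\|\bar q(t_0)\|,\,2\sqrt{\lambda_{\max}(\Lambda)/\lambda_{\min}(\Lambda)}\,\|S(t_0)\|/\lambda_{\min}(\Lambda)\}$.
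Choosing $k>\tfrac12\sup_t\|\omega_c(t)\|\,R_0$ then keeps $V_1$ strictly decreasing, by a continuity argument. This gives a semi-global result with an initial-condition-dependent gain.
\end{itemize}
Either way, the missing step is a uniform bound on $\dot T^{\top}\bar q$ paired with an explicit lower bound on $k$.
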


%\begin{remark}
The control strategy in Lemma~\ref{lemma1} is robust to matched disturbances but can induce {chattering}, which is mitigated via a boundary-layer SMC approach in \citep{IchitoTABUCHI2024105}.
This modification confines trajectories to a {boundary layer} around the manifold rather than exactly on the sliding manifold, potentially reducing steady-state accuracy. This effect, not addressed in \citep{IchitoTABUCHI2024105}, is analysed next by deriving analytical bounds on the sliding variable and tracking error within the boundary layer.
%\end{remark}

\subsection{Boundary-Layer SMC and Error Analysis}
This subsection formalizes the analysis of the Boundary-layer SMC (B-SMC) and quantifies the impact of the boundary layer on the system performance. Specifically, it establishes finite-time convergence of the trajectories to the boundary layer and provides explicit bounds on the sliding variable and tracking error while the system evolves within this layer.

The B-SMC law is as follows:
\begin{equation}
    \bar{u}
    = -k{\Lambda}\,\mathrm{sat}_\sigma\!\left(S(\bar{q}, \bar{p}, t)\right)
    - \frac{\partial S(\bar{q}, \bar{p}, t)}{\partial \bar{q}}\,
      T(t)\,\bar{p}
\label{eq:ctrl_input_sat}\end{equation}
where $\sigma>0$ determines the thickness of the boundary layer and the saturation function $\mathrm{sat}_\sigma(\cdot)$ acts component-wise on $S$, returning $\operatorname{sign}(S_i)$ if $|S_i|>\sigma$ and $S_i/\sigma$ otherwise.

Before analysing the dynamic evolution of the error system under the B-SMC, we state a mild assumption on the closed-loop boundedness, which will be used throughout the following developments.
\begin{hypo}%[Boundedness of the closed-loop trajectories]
\label{ass:boundedness}
The closed-loop solutions of the error dynamics \eqref{eq:error_dynamics} under the considered control laws are forward complete and bounded for all \(t\geq t_0\); i.e., there exist constants $M_q,M_p>0$ such that, for all \(t\geq t_0\):
\[
    \|\bar q(t)\|_\infty\le M_q,\qquad \|\bar p(t)\|_\infty\le M_p,\qquad \|\dot T(t)^\top \,\bar q\|_\infty
\le \mu.
\]
\end{hypo}

\begin{remark}
Hypothesis~\ref{ass:boundedness} is justified by the mission scenario: during the fine formation-keeping phase, each spacecraft employs continuous low-thrust control following a preceding coarse approach based on impulsive maneuvers \citep{doi:10.2514/1.15114}. This ensures small initial errors and guarantees bounded relative dynamics. 
Moreover, at 550~km altitude the chief's angular rate is $n_c\approx 1.1\times10^{-3}\,\mathrm{rad/s}$, hence the time-variation of the ECI--LVLH transform $T(t)$ is negligible and a finite (small) $\mu$ is justified.
\end{remark}

Now, the following theorem establishes that, under the proposed B-SMC, the system trajectories reach the boundary layer in finite time and remain therein for all subsequent times.

\begin{thm}\label{thm:boundary_layer_conv}
Consider the state errors \eqref{eq:errors} with dynamics \eqref{eq:error_dynamics}.  
Apply the control law \eqref{eq:ctrl_input_sat} with sliding variable \eqref{eq:sliding_variable}, and assume Hypothesis~\ref{ass:boundedness} holds.  
Let \(\Lambda=2\operatorname{diag}\{K_1,K_2,K_3\}\) and define $K_{\min}:=\min_i K_i$, $K_{\max}:=\max_i K_i$. 
Choose \(\sigma>0\) and the control gain \(k > \frac{3}{2}\mu\frac{K_{\max}}{K_{\min}} \), where \(\mu\) is from Hypothesis~\ref{ass:boundedness}.  
Then, the closed-loop trajectories converge in finite time \(t_r<\infty\) to 
\begin{equation}
\mathcal{B}_S = \{(\overline{q},\overline{p})\in\mathbb{R}^6 : \|S(\overline{q},\overline{p},t)\|_\infty \le \sigma\},
\label{eq:boundary_layer}
\end{equation}
and remain therein for all \(t\ge t_r\).

\begin{proof}
Consider the quadratic Lyapunov function \(V = S^\top S\) with time derivatives \(\dot V = 2 S^\top \dot S\). Substituting the closed-loop dynamics \eqref{eq:error_dynamics} gives
\begin{equation}
    \begin{split}
        \dot V(S) &= 2 S^\top \Bigl(-k \Lambda \mathrm{sat}_\sigma(S) + \Lambda/2\,\dot T^\top \bar q\Bigr)\\
&\le 2 S^\top \Lambda \Bigl(-k\,\mathrm{sat}_\sigma(S) + \frac{\mu}{2}\mathbf{1}\Bigr),
    \end{split}
\end{equation}
where $\mathbf{1}=(1,1,1)$ and Hypothesis~\ref{ass:boundedness} is used. 
If at least one component \(|S_i|>\sigma\), then \(S^\top \mathrm{sat}_\sigma(S) \ge \|S\|_\infty\) and
\[
S^\top \Lambda \mathrm{sat}_\sigma(S) \ge K_{\min} \|S\|_\infty, \quad
S^\top \Lambda \frac{\mu}{2} \mathbf{1} \le \frac{3\mu}{2} K_{\max} \|S\|_\infty.
\]
Combining yields
\begin{equation}
    \dot V(S) \le 2 \Big(-k K_{\min} + \tfrac{3\mu}{2} K_{\max}\Big) \|S\|_\infty =: -\alpha \|S\|_\infty,
    \label{eq:dot_Vs_last}
\end{equation}
with \(\alpha>0\) by the above choice of $k$. By equivalence of norms in \(\mathbb{R}^3\), \(\dot V(S) \le -\tilde\alpha \|S\| = -\tilde\alpha V^{1/2}\) with $\tilde{\alpha}=\alpha/\sqrt{3}$, guaranteeing finite-time convergence to
\(\mathcal{B}_S = \{(q,p): \|S\|_\infty \le \sigma\}\).
\end{proof}
\end{thm}

From Theorem~\ref{thm:boundary_layer_conv}, there exists a finite time $t_r<\infty$ such that \(\|S(t)\|_\infty \le \sigma\) \(\forall t \ge t_r.\)
Thus, for $t \ge t_r$ (sliding phase), the trajectories remain within the boundary layer, 
and the system evolves according to the dynamics induced by the sliding variable \eqref{eq:sliding_variable}. 
The convergence properties of the system in this phase are addressed in the following theorem.

\begin{thm}\label{thm:sliding_phase_conv}
Consider the state errors \eqref{eq:errors} with dynamics \eqref{eq:error_dynamics}, and assume that the trajectories evolve within the boundary-layer set \(\mathcal{B}_S\) defined in Theorem~\ref{thm:boundary_layer_conv} for any $t\geq t_r$. 
If the gains of the sliding variable \eqref{eq:sliding_variable} satisfy \(K_1,K_2,K_3>0\), then the state error \(\bar q\) converges asymptotically to the
set
\[
\mathcal{B}_q=\{\bar q \in \mathbb{R}^3 : \|\bar q(t)\|_\infty \le 2\sqrt{3}\sigma / K_{\min} \}, \quad 
K_{\min} := \min_i K_i.
\]

\begin{proof}
Consider the Lyapunov function \(V(\bar q) = \bar q^\top \bar q\), with time derivative
\(\dot V = 2 \bar q^\top \dot{\bar q}\) and \(\dot{\bar q} = T(t)\bar{p}\).  
From \eqref{eq:sliding_variable}, we have \(\bar p = S - \frac{\Lambda}{2} T^\top \bar q\), so that the \(\bar q\)-dynamics in the sliding phase reads
\[
\dot{\bar q} = T(t) S - \frac{1}{2} T(t) \Lambda T(t)^\top \bar q, \quad \|S\|_\infty \le \sigma.
\]
Exploiting the orthogonality of \(T\) (\(T^\top T = I\)) and the relation \(\|S\| \le \sqrt{3} \|S\|_\infty\), the Lyapunov derivative satisfies
\[
\dot V \le 2 \sqrt{3}\, \sigma \|\bar q\| - K_{\min} \|\bar q\|^2.
\]
Therefore, whenever \(\|\bar q\|
> 2\sqrt{3} \sigma / K_{\min}\), the quadratic term dominates and \(\dot V < 0\).  
By Lyapunov’s direct method, \(\bar q(t)\) converges asymptotically to the set \(\mathcal{B}_q \)
due to $\|\overline q\|_\infty\leq\|\overline q\|$, which concludes the proof.
\end{proof}
\end{thm}

Theorem~\ref{thm:sliding_phase_conv} shows that, once trajectories enter the boundary layer $\mathcal{B}_S$, the position error $\bar q$ is bounded by $\sigma/K_{\min}$. Reducing $\sigma$ is limited by chattering, so improving steady-state accuracy requires increasing $K$. However, high constant gains increase control aggressiveness for large position errors, creating a trade-off between disturbance rejection and actuator saturation.

\subsection{Boundary-Layer SMC with Adaptive Sliding Manifold}\label{sec:AB-SMC}
To overcome the limitations of the constant-gain B-SMC discussed above, this paper proposes an Adaptive Boundary-layer SMC (AB-SMC) strategy. First, the sliding variable is defined with a scalar gain $K(t)$, identical for all components.
\begin{equation}
S(\bar q, \bar p, t) = K(t)\,T(t)^\top \bar q + \bar p, 
\quad \Lambda = 2 K(t) I_3,
\label{eq:sliding_variable_scalar}\end{equation}
Then, $K(t)$ is adjusted in real-time according to the following adaptive law: 
\begin{equation}
    \dot K = \mathrm{Proj}_{\mathcal K} \Big( \eta(\overline K - K) - \gamma h_Q(\bar q) \|\bar q\|_\infty^2 \Big),
\label{eq:dot_K}\end{equation}
where \(\eta,\gamma>0\) are design parameters and $\mathcal{K}=[\underline{K},\overline{K}]$ is the adaptation interval within which $K_i$ is forced to evolve. Indeed, the $Proj$ function works as a saturated integrator \cite{pomet1992adaptive}:
\begin{equation}\begin{split}
&\underset{{K} \in \mathcal{K}}{\text{Proj}}=\begin{cases}
        \text{max}\{0,f\}\quad&\text{if}~K=\underline{K}\\
        f\quad&\text{if}~K\in\text{int}(\mathcal{K})\\
        \text{min}\{0,f\}\quad&\text{if}~K=\overline{K}
    \end{cases},\\
&f=\eta\left(\overline{K}-K \right)-\gamma h_Q(\bar q)\|\overline{q}\|_\infty^2.
\label{eq:proj_nth}\end{split}\end{equation}
Then, $h_Q$ is as follows:
\begin{equation}\begin{split}
&h_Q(|\overline{q}|)=\begin{cases}
1,& \text{if } \|\bar q\|_\infty>Q,\\
0,& \text{if } \|\bar q\|_\infty\le Q,
\end{cases}
\label{eq:h_q}\end{split}\end{equation}
and $Q>0$ is a selectable parameter. The adaptive law \eqref{eq:dot_K}-\eqref{eq:h_q} governs the evolution of the sliding gain $K(t)$ according to the instantaneous tracking error $\bar q$: gains decrease for large errors to limit control effort and increase for small errors to improve steady-state performance. This behaviour is rigorously characterized in the following proposition.

\begin{proposition}\label{prop:adaptive_gain_behavior}
Consider the adaptive law \eqref{eq:dot_K}–\eqref{eq:h_q}, with parameters $Q>0$, $\eta>0$, and 
\begin{equation}
    \gamma = \eta(\overline{K}-\underline{K})/Q^2. 
\label{eq:gamma}\end{equation}
Then, the evolution of \(K(t)\) satisfies:
\begin{enumerate}
    \item If $\|\bar q\|_\infty>Q$, then $K(t)$ converges in finite time to the lower bound $\underline{K}$.
    \item If $\|\bar q\|_\infty\le Q$, then $K(t)$ converges exponentially to the upper bound $\overline{K}$.
\end{enumerate}

\begin{proof}
Consider first the case $\|\bar q\|_\infty>Q$. Define the Lyapunov candidate
\(V = (K - \underline{K})^2.\)
For $K \in (\underline K, \overline K)$, the projection operator is inactive, and the time derivative of the Lyapunov candidate along the trajectories of \eqref{eq:dot_K}–\eqref{eq:h_q} is
\begin{equation}\begin{split}
    \dot V &= 2(K - \underline K) \dot K = 2(K - \underline K) \, \Big(\eta(\overline K - K) - \gamma \|\bar q\|_\infty^2 \Big)\\
    & \leq -2\eta(\overline K - \underline K)\left(\frac{\|\bar q\|_\infty^2}{Q^2}-1\right)(K - \underline K):=-\alpha(K - \underline K)
\end{split}\end{equation}
with $\alpha>0$ due to $\|\bar q\|_\infty>Q$. Therefore, \(\dot V < -\alpha V^\frac{1}{2}\) whenever $K>\underline K$. 
By Lyapunov’s direct method, $K$ reaches $\underline K$ in finite time.

Next, for $\|\bar q\|_\infty\le Q$, the adaptive law reduces to
\[
\dot K = \eta (\overline K - K), \quad K \in (\underline K, \overline K),
\]
whose solution is
\[
K(t) = \overline K - (\overline K - K(t_0)) e^{-\eta (t-t_0)}.
\]
This shows that $K(t)$ converges exponentially to $\overline K$ with a convergence rate given by \(\eta\).
\end{proof}
\end{proposition}
The following remarks highlight two key points arising from the above proposition, which will be instrumental in the subsequent stability analysis.

\begin{remark}%[Boundedness of the gain derivative]
Under Hypothesis~\ref{ass:boundedness}, $\bar q(t)$ is bounded by $M_q$, hence from \eqref{eq:dot_K}--\eqref{eq:h_q} one has
\[
|\dot K(t)| \le \eta(\overline K-\underline K)+\gamma M_q^2 := D_{\max}, \quad \forall t\ge t_0.
\]
Thus the adaptive gain evolves with uniformly bounded rate.
\label{remark_kmax}\end{remark}

\begin{remark}%[Finite duration of the adaptive transient]
From Proposition~\ref{prop:adaptive_gain_behavior}, $K(t)$ reaches one boundary of $\mathcal K=[\underline K,\overline K]$ in finite time; thereafter, the projection operator saturates the update and sets $\dot K(t)=0$. Hence the adaptation is active only over a finite interval, and $\dot K(t)\!\to\!0$ in finite time.
\label{remark_dotK_2_0}\end{remark}

We are now ready to state the main stability theorem for the proposed AB-SMC, characterizing the convergence properties of the closed-loop system.

\begin{thm}%[Convergence under adaptive B-SMC]
\label{thm:two_phase}
Consider the state errors \eqref{eq:errors} with dynamics \eqref{eq:error_dynamics},    controlled by the AB-SMC with control law \eqref{eq:ctrl_input_sat}, sliding variable \eqref{eq:sliding_variable_scalar}, and the adaptive law \eqref{eq:dot_K}--\eqref{eq:h_q}. 
Assume Hypothesis~\ref{ass:boundedness} and select $k>\frac{3}{2}\mu$ and $\gamma$ as in Eq. \eqref{eq:gamma}. Let $\sigma>0$ denote the boundary-layer thickness, and $\mathcal K=[\underline K,\overline K]$ the admissible interval for the sliding gain $K(t)$, with \(\overline{K}>\underline{K}>0\).   
Then the closed-loop exhibits the following two phases.

\begin{enumerate}
\item[\((\mathrm{P}1)\)] (\emph{Coarse phase:} $\|\bar q\|_\infty>Q$)  
If initially the tracking error satisfies $\|\bar q(t_0)\|_\infty>Q$, then $K(t)$ converges in finite time to $\underline K$ and the trajectories enter the boundary layer surrounding the sliding manifold parametrized by $\underline K$. While sliding on this manifold, the position error converges asymptotically to
\begin{equation}
\mathcal B_q^{\underline K}
= \Big\{\bar q\in\mathbb R^3:\;\|\bar q\|_\infty \le \frac{2\sqrt{3}\sigma}{\underline K}\Big\}.
    \label{eq:coarse_set}
\end{equation}
\item[\((\mathrm{P}2)\)] (\emph{Fine phase:} $\|\bar q\|_\infty\le Q$)  
If the tracking error satisfies $\|\bar q\|_\infty\le Q$ (either initially or after (P1)), then the gains evolve to $\overline K$ (exponentially) and in finite time the trajectories enter the boundary layer surrounding the sliding manifold parametrized by $\overline K$. While sliding on this manifold, the position error converges asymptotically to
\begin{equation}
\mathcal B_q^{\overline K}
= \Big\{\bar q\in\mathbb R^3:\;\|\bar q\|_\infty \le \frac{2\sqrt{3}\sigma}{\overline K}\Big\}.
 \label{eq:fine_set}
\end{equation}
\end{enumerate}

Moreover, if the threshold is chosen as
\begin{equation}
Q=\frac{2\sqrt{3}\sigma}{\underline K},
\label{eq:transition}
\end{equation}
then the attainment of (P1) (i.e. convergence to $\mathcal B_q^{\underline K}$) implies that the condition for (P2) is satisfied, hence the closed-loop necessarily transitions from (P1) to (P2).

\begin{proof}
1. (\emph{Coarse phase, $\|\bar q(t_0)\|_\infty>Q$})  
Consider the Lyapunov function \(V=S^\top S\). Its derivative under \eqref{eq:ctrl_input_sat}, \eqref{eq:sliding_variable_scalar}, \eqref{eq:dot_K} reads
\begin{equation}
\dot V = 2 S^\top\Big(-k\Lambda\,\mathrm{sat}_\sigma(S) + \tfrac{\Lambda}{2}\dot T^\top \bar q\Big)
  + 2 S^\top(\dot K\,T^\top\bar q),
\label{eq:dotS_general}
\end{equation}
where the last term is of indefinite sign but uniformly bounded (Hypothesis~\ref{ass:boundedness}, Remark~\ref{remark_kmax}), ensuring \(\|S\|\) remains finite during the $K$-adaptation.  
Proposition~\ref{prop:adaptive_gain_behavior} and Remark~\ref{remark_dotK_2_0} guarantee that \(K\to \underline K\) in finite time \(t_K\), after which \(\dot K=0\) and \(\dot V\) reduces to the constant-gain form \eqref{eq:dot_Vs_last}.  
Theorems~\ref{thm:boundary_layer_conv}–\ref{thm:sliding_phase_conv} then ensure finite-time entry into the boundary layer and asymptotic convergence to \(\mathcal B_q^{\underline K}\).

2. (\emph{Fine phase, $\|\bar q\|_\infty\le Q$})\\
The arguments are analogous to those in the coarse phase. Here \(K(t)\to \overline K\) exponentially (Proposition~\ref{prop:adaptive_gain_behavior}).  
Considering the Lyapunov function \(V=S^\top S\), the indefinite term in \(\dot V\) (Eq.~\ref{eq:dotS_general}) is bounded and decays exponentially along with \(\dot K\), becoming practically negligible after a finite transient.  
Thereafter, \(\dot V\) reduces to the constant-$K$ form (Eq.~\ref{eq:dot_Vs_last}), so Theorems~\ref{thm:boundary_layer_conv}–\ref{thm:sliding_phase_conv} guarantee finite-time entry into the boundary layer and asymptotic convergence to \(\mathcal B_q^{\overline K}\).

3. (\emph{Phase transition})  
Choosing $Q = 2\sqrt{3}\sigma/\underline K$ aligns the bound of $\mathcal B_q^{\underline K}$ with the activation threshold, so convergence to $\mathcal B_q^{\underline K}$ (P1) implies $\|\bar q\|_\infty\le Q$, which triggers the adaptive increase $K\to\overline K$ and the transition to (P2).
\qed
\end{proof}

\end{thm}
 
\textcolor{black}{\begin{remark}
The switching function $h_Q(\bar q)$ defined in Eq. \eqref{eq:h_q} can be endowed with a hysteresis mechanism to improve robustness with respect to measurement noise and non-monotonic disturbances. Specifically, instead of a single threshold $Q$, two thresholds $Q_{\mathrm{off}}<Q_{\mathrm{on}}$ are introduced and $h_Q$ can be defined as a binary variable with memory according to
\[
h_Q(t^+)=
\begin{cases}
0, & \|\bar q(t)\|_\infty \le Q_{\mathrm{off}},\\
1, & \|\bar q(t)\|_\infty \ge Q_{\mathrm{on}},\\
h_Q(t), & Q_{\mathrm{off}} < \|\bar q(t)\|_\infty < Q_{\mathrm{on}}.
\end{cases}
\]
Accordingly, the term $\gamma h_Q(\bar q)\|\bar q\|_\infty^2$ in the adaptive law \eqref{eq:dot_K} is active for $\|\bar q\|_\infty \ge Q_{\mathrm{on}}$ and inactive for $\|\bar q\|_\infty \le Q_{\mathrm{off}}$, while retaining its previous value otherwise.
In particular, choosing $Q_{\mathrm{off}}=Q$ preserves the transition condition between Phase~(P1) and Phase~(P2) stated in Theorem~6. Furthermore, for any time interval where $h_Q(t)$ is constant, the closed-loop system reduces to either Phase (P1) or Phase (P2), and the conclusions of Theorem~6 apply on such intervals. Hysteresis mechanisms of this kind are standard in adaptive and switching control to avoid chattering phenomena due to small oscillations around the nominal threshold; see, e.g., \citep{159571, 2005ICSys}.
\end{remark}
\begin{remark}
    The proposed adaptive law (12)–(15) requires the selection of three design parameters: $\eta$,  $\gamma$, and $Q$. While Eqs. (15) and (19) provide explicit relationships to select $\gamma$ and $Q$ once $\eta$ is chosen, the choice of the adaptation rate $\eta$ itself deserves special attention.
    As shown in Proposition 5, the parameter $\eta$ sets the convergence rate of $K(t)$ toward $\overline{K}$ during the fine phase (P2). On the other hand, Theorem 6 shows that, during the adaptation transient, the nonzero term $\dot{K}(t)$ introduces additional perturbative terms in the $S-$dynamics. Since the magnitude of these terms increases with $\eta$, this should be selected as a compromise between adaptation speed and robustness of the sliding motion: smaller values of $\eta$ reduce the perturbations induced by $\dot{K}(t)$ and favor the preservation of sliding during the transient, whereas larger values of $\eta$ yield faster convergence of $K(t)$ at the expense of a more pronounced transient disturbance.
%Specifically, $\eta$ governs the speed at which the adaptive gain $K(t)$ converges to $\overline K$ during the \textit{Fine phase} (P2) as defined in Theorem 6. As discussed in point 2. of the Proof of Theorem 6, the time variation of $K(t)$ introduces additional perturbative terms in the Lyapunov function derivative during this transient. Smaller values of $\eta$ reduce the magnitude of these terms, promoting the persistence of sliding motion on the time-varying sliding manifold, while larger values of $\eta$ accelerate the convergence of $K(t)$ but may temporarily violate the reaching condition. Therefore, $\eta$ should be selected to balance robustness during the adaptation transient with the desired speed of convergence of the adaptive gain $K(t)$.
\end{remark}}

The proposed AB-SMC strategy overcomes the trade-off inherent in constant-gain B-SMC between fast transient response and steady-state precision. By adjusting the sliding gain $K(t)$ according to the tracking error, the AB-SMC achieves both a smoother transient for large errors and improved steady-state accuracy for small errors. The effectiveness of this approach will be demonstrated in the numerical simulation results in Section~\ref{sec:simulations}.

\section{Simulation and Results}\label{sec:simulations}

This section presents the numerical validation of the proposed AB--SMC strategy for satellite formation keeping. The performance of the AB--SMC is evaluated through detailed time-domain simulations and compared with three alternative control laws. 
The simulations are designed to assess the controllers in terms of transient response, steady-state accuracy, chattering behaviour, and robustness under perturbed initial conditions. 
The following subsections describe the simulation setup, the time-domain results, an analysis of chattering effects, tuning strategies for the B--SMC, and a Monte Carlo campaign to statistically evaluate controller performance.

\textcolor{black}{\subsection{Orbital Simulator Framework}
The orbital simulation framework models the relative motion of multiple spacecraft in LEO while explicitly accounting for modeling errors and uncertainties. The orbital simulator is derived from the analysis described in  \citep{SILVIA25}. 
In the simulator, nonlinear orbital dynamics are propagated including dominant perturbations such as $J_2$ effect, atmospheric drag, and solar radiation pressure.
%This simulator enables performance assessment and validation of formation-flying control strategies under realistic operational conditions relevant to space-based interferometry missions.
Actuator uncertainties are accounted by considering saturation constraints, minimum thrust limitations, and noise effects. In particular, the thruster noise is modeled as zero-mean Gaussian white noise with an amplitude spectral density (ASD)
of $1~\mu\mathrm{N}/\sqrt{\mathrm{Hz}}$ over a bandwidth of $B = 1~\mathrm{Hz}$.
Under this assumption, the root-mean-square (RMS) thrust noise over the specified bandwidth is
\begin{equation}
F_{\mathrm{RMS}} = F_{\mathrm{ASD}} \sqrt{B}
= 1 \times 10^{-6}~\mathrm{N},
\end{equation}
which corresponds to an acceleration noise of $a_{\mathrm{RMS}}=5.5\times10^{-8}~\text{m/s}^2$ for a spacecraft mass of $m = 180~\mathrm{kg}$. 
Then, the continuous-time white noise model is discretized by scaling the RMS acceleration noise as
\begin{equation}
a_{\mathrm{noise}} = \frac{a_{\mathrm{RMS}}}{\sqrt{\Delta t}} \, \mathcal{N}(0,1),
\end{equation}
where $\mathcal{N}(0,1)$ is a standard normal random variable and $\Delta t$ denote the discrete simulation time step. In simulations, the acceleration noise term $a_{\mathrm{noise}}$ is added to the nominal acceleration command, while an upper saturation limit is applied to model the maximum available thrust.
%For a spacecraft mass of $m = 180~\mathrm{kg}$, the corresponding RMS acceleration noise is
% \begin{equation}
% a_{\mathrm{RMS}} = \frac{F_{\mathrm{RMS}}}{m}
% = \frac{1 \times 10^{-6}}{180}~\mathrm{m/s^2}.
% \end{equation}
In addition, the thrusters are assumed to exhibit a minimum achievable thrust $f_{\min}=10~\mu\mathrm{N}$, which corresponds to a minimum acceleration of $a_{min} = 5.5\times10^{-7}~\text{m/s}^2$.
Thrust commands below this level cannot be reliably produced and are therefore neglected. In simulations, when the magnitude of the commanded acceleration falls below $a_{\min}$, the command is clipped to zero.}
% In this study, the minimum thrust is set to
% \begin{equation}
% f_{\min} = 10~\mu\mathrm{N},
% \end{equation}
% which corresponds to a minimum achievable acceleration of
% \begin{equation}
% a_{\min} = \frac{f_{\min}}{m}
% = \frac{10 \times 10^{-6}}{180}~\mathrm{m/s^2}.
% \end{equation}
%\subsection{Simulation Implementation Details}
% Let $\Delta t$ denote the discrete simulation time step.
% The continuous-time white noise model is discretized by scaling the RMS acceleration noise as
% \begin{equation}
% a_{\mathrm{noise}} = \frac{a_{\mathrm{RMS}}}{\sqrt{\Delta t}} \, \mathcal{N}(0,1),
% \end{equation}
% where $\mathcal{N}(0,1)$ is a standard normal random variable.
% %In MATLAB, this term is generated using the \texttt{randn} function.
% The acceleration noise term $a_{\mathrm{noise}}$ is added to the nominal acceleration command.
% Furthermore, commanded accelerations are subject to actuator constraints:
% an upper saturation limit is applied to model the maximum available thrust.
% while a lower deadband is enforced to account for the minimum thrust capability.
% Specifically, when the magnitude of the commanded acceleration falls below $a_{\min}$, the command is clipped to zero.}

\subsection{Simulation Setup}
The effectiveness of the proposed AB-SMC (Eq.~\ref{eq:ctrl_input_sat}, \ref{eq:sliding_variable_scalar}-\ref{eq:h_q}) is verified by numerical simulations and compared with the passivity-based controller (PBC) from ~\citep{SH26, IchitoTABUCHI2024105}, the classical SMC (Eq.~\ref{eq:ctrl_input_sign}), and the constant-gain B-SMC (Eq.~\ref{eq:ctrl_input_sat}). 
The parameters of the controllers used in the simulations are summarized in Table~\ref{tab:control_params}, where it is implied that the three sliding mode controllers employ the sliding surface \eqref{eq:sliding_variable_scalar} with a unique scalar gain for all components.
\begin{table}[htbp]\label{tab:control_params}
\centering
\caption{Controller parameters}
\begin{tabular}{ll}
\toprule
\textbf{Controller} & \textbf{Parameters} \\
\midrule
PBC & $K_p = 0.1$, $K_d = 1.0$ \\
SMC & $k = \tfrac{5}{\sqrt{3}}\times10^{-3}$, $K = 0.1$ \\
B--SMC & $k = \tfrac{5}{\sqrt{3}}\times10^{-3}$, $\sigma = k/5$, $K = 0.2$ \\
AB--SMC & $k = \tfrac{5}{\sqrt{3}}\times10^{-3}$, $\sigma = k/5$, $\eta = 0.05$, $\overline{K} = 0.2$, \\ 
& $\underline{K}$ s.t. $Q = 4\times10^{-3}$~m~\text{(Eq. \ref{eq:transition})}\\
\bottomrule
\end{tabular}
\end{table}

The target formation maintains an equilateral triangle with 100~m side length in a circular orbit at 550~km altitude, corresponding to a relative distance $\rho = 57.735$~m. The chief’s orbital elements are \(a_c = 6928.137~\text{km}\), \( e_c = 0\), \(i_c = 98^{\circ}\), \(\Omega_c = 0^{\circ}\), 
while the desired relative trajectory for the %$i-$th deputy ($i=1,2,3$) is 
deputies are \(q_{d_1} =\frac{1}{2}\rho \sin(n_c t - \frac{\pi}{2})\), \(q_{d_2} = \rho \cos(n_c t + \frac{\pi}{6})\), and \(q_{d_3}=\frac{\sqrt{3}}{2}\rho \sin(n_c t + \frac{5\pi}{6})\).
% \begin{equation}
% \label{eq:qd}
% q_{d_i} =
% \begin{bmatrix}
% \frac{1}{2}\rho \sin(n_c t + \phi_i) \\
% \rho \cos(n_c t + \phi_i) \\
% \frac{\sqrt{3}}{2}\rho \sin(n_c t + \phi_i)
% \end{bmatrix},
% \quad
% \boldsymbol{\phi}=\left[-\frac{\pi}{2},\,\frac{\pi}{6},\,\frac{5\pi}{6}\right].
% \end{equation}
An initial position error of 0.25~m (0.5\%) is applied, and $\rho$ in the formulation of $q_{d}$ %from Eq.~\eqref{eq:qd} 
is replaced by $1.005\rho$ for deputies~1 and~2 and by $0.995\rho$ for deputy~3. A micro-thruster is used as actuator, with $\|u\|_{\max}=5\times10^{-6}$~m/s$^2$.
Simulations are performed in MATLAB using a 4th-order Runge–Kutta integrator with 0.1~s. The simulator propagates the formation dynamics over time by integrating Eq.~\eqref{eq:ECI_dynamics} 
from the initial conditions described above, while the control input can be selected among the four controllers listed in Table~\ref{tab:control_params}.
\begin{figure}
    \centering
    \includegraphics[width=0.75\linewidth]{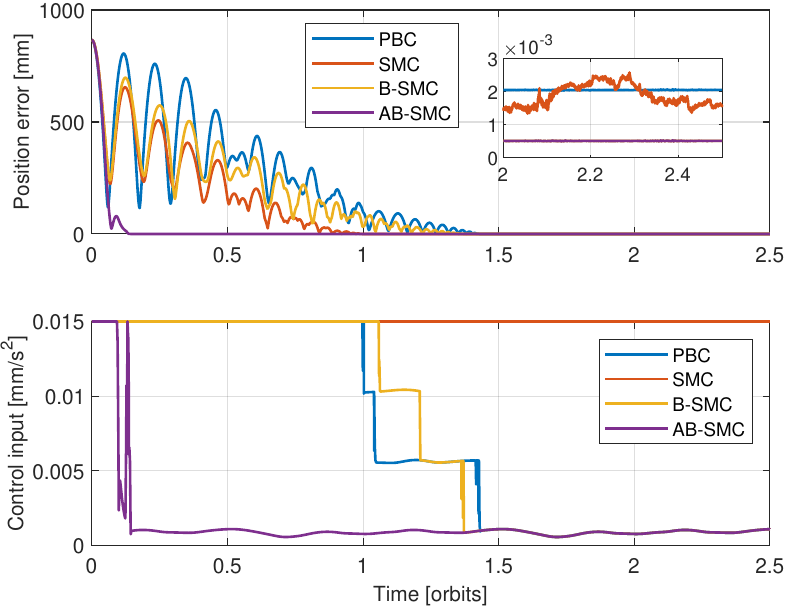}
    \caption{Global formation error (top) and total control input (bottom) 
for the control laws in Table \ref{tab:control_params}.}
    \label{fig:timeseries}
\end{figure}

\subsection{Time-Domain Performance}\label{sec:time-domain-res}
Figure~\ref{fig:timeseries} illustrates the time evolution of the total formation error 
(upper plot) and the total control effort (lower plot) for the four controllers. 
The global formation error is defined as 
$e_{\mathrm{tot}} = \sum_{i=1}^{3} \|q_i - q_{d_i}\|$, 
while the total control input is 
$u_{\mathrm{tot}} = \sum_{i=1}^{3} \|u_i\|$ (i=1,2,3 {deputies}). 
It can be observed that the proposed AB--SMC achieves a significantly faster convergence 
to the desired formation compared to the other controllers. 
Moreover, the steady-state error, reported in the inset box for half an orbit, 
is smaller then both the SMC and the PBC. 
As expected, it matches exactly the steady-state performance of the B--SMC, 
since both share the same constant upper gain $\overline{K}$ in steady state.
Furthermore, the bottom panel of Fig.~\ref{fig:timeseries} indicates that the AB--SMC achieves this improved performance while demanding the lowest overall control effort among all compared controllers.
\begin{figure}
    \centering
    \includegraphics[width=0.8\linewidth]{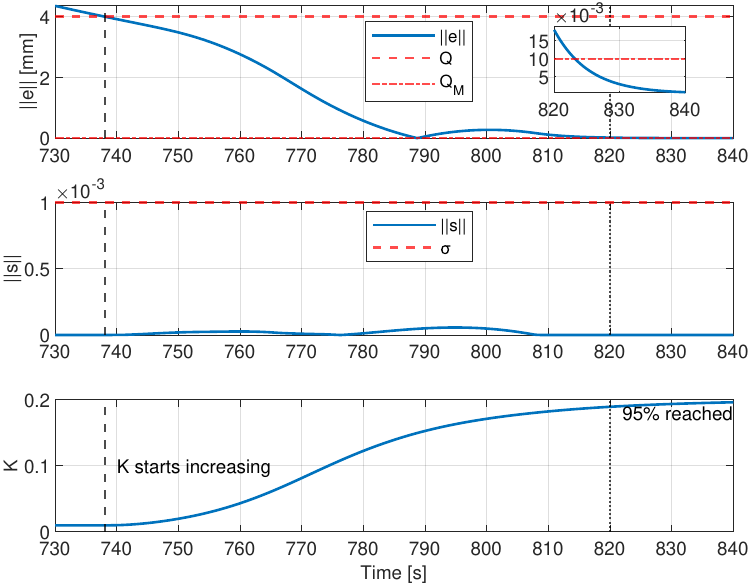}
    \caption{Representative evolution of the third deputy under the AB--SMC: 
tracking error $\|e\|_\infty$, sliding variable $\|s\|_\infty$, and adaptive gain $K(t)$.}
    \label{fig:proof}
\end{figure}
Figure~\ref{fig:proof} shows a representative time segment of the third deputy’s dynamics under the proposed AB--SMC, illustrating and validating the theoretical results presented in Section~\ref{sec:AB-SMC}.
When the tracking error satisfies $\|e\|_\infty \le Q$, i.e. $\bar{q}\in\mathcal B_q^{\underline K}$ in Eq. \eqref{eq:coarse_set}, 
the adaptive law activates and the gain $K(t)$ starts increasing. 
During this adaptation transient, the sliding variable \(S\) exhibits small oscillations, 
although it remains inside the boundary layer (Eq.~\ref{eq:boundary_layer}). 
As the adaptation subsides, the gain \(K(t)\) reaches its steady-state value \(\overline{K}\), 
and the oscillations of the sliding variable gradually vanish.
After this adaptation phase, the tracking error is confined within the predicted set  $Q_M = 2\sqrt{3}\,\sigma/\overline{K}$, confirming the asymptotic bound in Eq.~\eqref{eq:fine_set}.

\subsection{Chattering Analysis}
To quantify the chattering behavior of the controllers, we compute the total variation (TV) of a generic signal \(\xi\) as follows:
\[
\text{TV} = \sum_k |\xi_{k+1} - \xi_k|,
\]
where $k$ indexes the sampling instants. In our analysis, \(\xi\) represents either the position tracking error or the control input, computed separately for the $x$, $y$, and $z$ axes of each deputy and then summed over the three deputies.
Table~\ref{tab:chattering} summarizes the results computed during steady-state, which is defined as the period when the position error remains below $10^{-3}$~m.
\begin{table}[htbp]
\centering
\caption{Total variation of position error and control input (chattering measure)}
\label{tab:chattering}
\begin{tabular}{lcc}
\toprule
\textbf{Controller} & \textbf{TV Position Error} & \textbf{TV Control Input} \\
 & $[mm]$ & $[mm/s^2]$ \\
\midrule
PBC & $9.26\times10^{-2 }$  & $3.12\times10^{-2}$ \\
SMC & $2.15$ & $6.27\times10^{2}$ \\
B--SMC & $9.56\times10^{-2}$ & $5.14\times10^{-2}$ \\
AB--SMC & $1.99\times10^{-1}$ & $1.07\times10^{-1}$ \\
\bottomrule
\end{tabular}
\end{table}
It can be observed that the classical SMC exhibits significant chattering, especially on the control input, as expected. The B--SMC effectively suppresses this effect, reducing the total variation to the same order of magnitude as the PBC. The proposed AB--SMC provides a similar level of chattering reduction, slightly higher than the B--SMC, but still negligible, confirming that the adaptive mechanism does not introduce any practical chattering issues.

\subsection{B--SMC Tuning Analysis}
As observed in Section \ref{sec:time-domain-res}, the constant-gain B--SMC with a high gain $\overline{K}$ leads to unsatisfactory transient performance, despite providing good steady-state accuracy. To address this issue and achieve both satisfactory transient behavior and steady-state performance without requiring adaptation, we propose an alternative tuning strategy for the B--SMC. In this approach, the control gain is set to its lower bound, $K = \underline{K}$, which, as previously noted, ensures favorable dynamic behavior, while the boundary layer thickness $\sigma$ is reduced to improve steady-state accuracy.
%The influence of the boundary layer thickness $\sigma$ and the constant gain $K$ on the performance of the B--SMC is investigated. In particular, we consider a tuning strategy with $K = \underline{K}$ and progressively smaller values of $\sigma$ to improve steady-state accuracy. 
Table~\ref{tab:bs_mc_tuning} summarizes the results computed during steady-state, defined as the period when the position error remains below $10^{-1}$~m. This threshold ensures a meaningful comparison across all controllers. The table also includes results for the standard B--SMC with $K = \overline{K}$ and for the AB--SMC.
Table~\ref{tab:bs_mc_tuning} shows that reducing the boundary layer thickness $\sigma$ generally improves steady-state precision. However, for very small $\sigma$ (last row), the control input exhibits a sharp increase in total variation, indicating pronounced chattering. This chattering not only increases actuator activity but also compromises closed-loop precision, preventing the controller from reaching the accuracy levels achieved by the AB--SMC or the B--SMC with $K = \overline{K}$. These results confirm that the adaptive strategy can provide performance that is unattainable with a constant-gain B--SMC.

\begin{table}[htbp]
\centering
\caption{RMSE of position error and total variation of control input in steady-state}
\label{tab:bs_mc_tuning}
\resizebox{\columnwidth}{!}{
\begin{tabular}{lcc}
\toprule
\textbf{Controller / Tuning} & \textbf{RMSE} \((mm)\) & \textbf{TV Input} \((mm/s^2)\) \\
\midrule
B--SMC ($K=\overline{K}$, $\sigma$) & $1.55\cdot10^{-3}$ & $5.87\cdot10^{-2}$ \\
AB--SMC & $1.51\cdot10^{-3}$ & $1.11\cdot10^{-1}$ \\
B--SMC ($K=\underline{K}$, $\sigma*0.05$) & $1.18\cdot10^{-2}$ & $6.52\cdot10^{-2}$ \\
B--SMC ($K=\underline{K}$, $\sigma*0.01$) & $6.43\cdot10^{-3}$ & $6.89\cdot10^{-2}$ \\
B--SMC ($K=\underline{K}$, $\sigma*0.005$) & $6.21\cdot10^{-3}$ & $8.74\cdot10^{-2}$ \\
B--SMC ($K=\underline{K}$, $\sigma*0.001$) & $2.00\cdot10^{-2}$ & $1.36$ \\
\bottomrule
\end{tabular}}
\end{table}

\subsection{Monte Carlo analysis}
A Monte Carlo campaign was performed to statistically evaluate the robustness and performance of the proposed control laws under perturbed initial conditions. 
For each run, the initial position of every deputy were randomly sampled within a hypersphere of radius 0.5~m centered on the nominal GCO state, using uniform random sampling. 
All simulations were propagated for 2.5 orbital periods, and all controllers successfully achieved formation convergence (100\% success rate). 
The results are summarized in Figure~\ref{fig:MC_results}, which compares the four controllers in terms of settling time, total control effort, and RMSE of the global formation position error. 
The settling time is defined as the instant when the global formation error falls below $10^{-3}$~m, while the control effort is computed by time integration of the total control input using the \texttt{trapz} function. 
Each boxplot represents the statistical distribution across all Monte Carlo runs: the central line indicates the median, the box edges denote the first and third quartiles, and the whiskers extend to the most extreme data points not considered outliers. 
As shown in Figure~\ref{fig:MC_results}, the proposed AB--SMC consistently outperforms the other controllers, exhibiting the smallest median settling time, the lowest control effort, and the minimum RMSE of the formation position error.
\begin{figure}
    \centering
    \includegraphics[width=0.7\linewidth]{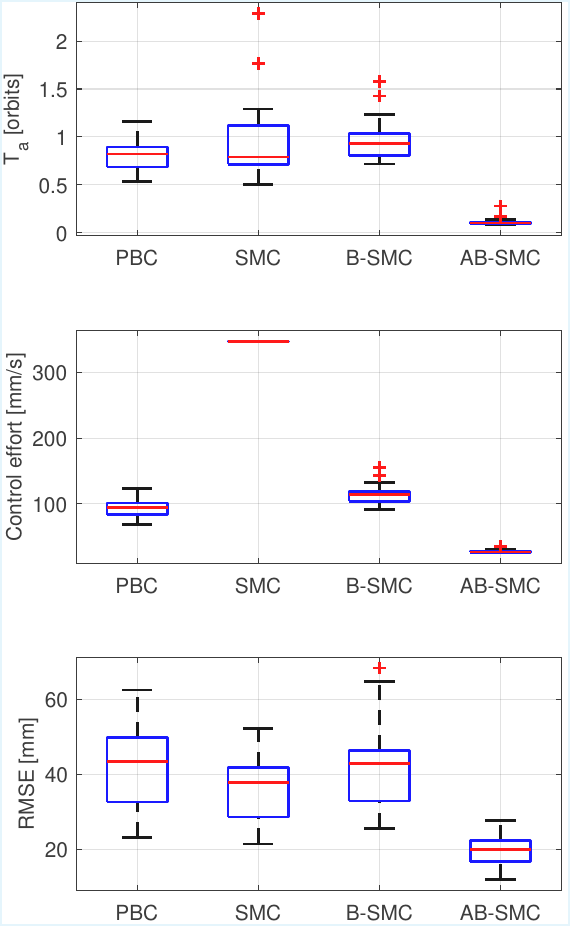}
    \caption{Statistical comparison of the controllers in Table \ref{tab:control_params} in terms of settling time, total control effort, and RMSE of the global formation position error across all Monte Carlo runs.}
    \label{fig:MC_results}
\end{figure}
Then, Table~\ref{tab:MC_RMSE} reports the median and standard deviation of the RMSE 
of the global formation position error for each controller, evaluated in steady-state, considering the interval where the global formation error remains below $10^{-3}$ m. The results further confirm the superior precision of the AB--SMC.
\begin{table}[htbp]
\centering
\caption{RMSE of global formation position error from Monte Carlo simulations}
\begin{tabular}{lcc}
\toprule
\textbf{Controller} & \textbf{Median RMSE [m]} & \textbf{Std. deviation [m]} \\
\midrule
PBC    & $6.87\times10^{-4}$ & $\pm1.11\times10^{-7}$ \\
SMC    & $6.37\times10^{-4}$ & $\pm1.08\times10^{-5}$ \\
B--SMC & $1.72\times10^{-4}$ & $\pm6.87\times10^{-8}$ \\
AB--SMC & $1.72\times10^{-4}$ & $\pm7.96\times10^{-8}$ \\
\bottomrule
\end{tabular}
\label{tab:MC_RMSE}
\end{table}

Overall, the Monte Carlo analysis demonstrates the superior global control performance of the proposed AB--SMC, 
which achieves the best trade-off between transient response, steady-state accuracy, and control efficiency 
across a wide range of perturbed initial conditions.

\section{Conclusions}
An adaptive boundary sliding mode control (AB-SMC) strategy for ultra-precise spacecraft formation flying has been presented within a port-Hamiltonian framework. This formulation preserves the intrinsic nonlinear dynamics of the formation and enables a rigorous Lyapunov-based stability analysis. The trade-off inherent in conventional constant-gain boundary SMC between transient response and steady-state accuracy was addressed through an adaptive gain regulation mechanism based on the tracking error. Numerical simulations have shown that the proposed AB-SMC approach provides faster convergence, reduced control effort, and sub-millimeter tracking accuracy, while maintaining robustness to orbital perturbations and effectively suppressing chattering. The results indicate that the proposed method is a promosing and efficient solution for future space interferometry missions requiring high-precision and energy-efficient formation control.
%This work presented an AB-SMC strategy for ultra-precise spacecraft formation flying, developed within a port-Hamiltonian framework to preserve the intrinsic nonlinear dynamics of the formation and enable rigorous Lyapunov-based stability analysis. After identifying the trade-offs of constant-gain B-SMC between transient performance and steady-state accuracy, an adaptive mechanism is introduced to dynamically tune the control gains according to the tracking error. Numerical simulations demonstrate that AB-SMC achieves faster convergence, lower control effort, and submillimeter tracking precision, while maintaining robustness against orbital perturbations and effectively mitigating chattering. Overall, the proposed controller offers a promising solution for next-generation space interferometry missions requiring high-precision and energy-efficient formation control.

% \begin{ack}
% Place acknowledgments here.
% \end{ack}

% \section*{DECLARATION OF GENERATIVE AI AND AI-ASSISTED TECHNOLOGIES IN THE WRITING PROCESS}
% During the preparation of this work the author(s) used [NAME TOOL / SERVICE] in order to [REASON]. After using this tool/service, the author(s) reviewed and edited the content as needed and take(s) full responsibility for the content of the publication.
 \bibliographystyle{elsarticle-num} 
\bibliography{ifacconf}             % bib file to produce the bibliography
                                                     % with bibtex (preferred)

% \appendix
% \section{A summary of Latin grammar}    % Each appendix must have a short title.
% \section{Some Latin vocabulary}              % Sections and subsections are supported  
                                                                         % in the appendices.
\end{document}